\newtheorem{thm}{Theorem}
\title[]{Hamiltonians representing equations of motion with damping due to friction}
\author{Stephen Montgomery-Smith}
\address{Department of Mathematics, University of Missouri, Columbia, MO 65211, U.S.A.}
\keywords{Hamiltonian, Lagrangian, Rayleigh dissipation function, friction, N\"other's Theorem.}
\begin{document}

\begin{abstract}
Suppose that $H(q,p)$ is a Hamiltonian on a manifold $M$, and $\tilde L(q,\dot q)$, the Rayleigh dissipation function, satisfies the same hypotheses as a Lagrangian on the manifold $M$.  We provide a Hamiltonian framework that gives the equation
\begin{equation*}
\dot q = \frac{\partial H}{\partial p}(q,p) , \quad
\dot p = - \frac{\partial H}{\partial q}(q,p) - \frac{\partial \tilde L}{\partial \dot q}(q,\dot q)
\end{equation*}
The method is to embed $M$ into a larger framework where the motion drives a wave equation on the negative half line, where the energy in the wave represents heat being carried away from the motion.  We obtain a version of N\"other's Theorem that is valid for dissipative systems.  We also show that this framework fits the widely held view of how Hamiltonian dynamics can lead to the ``arrow of time.''
\end{abstract}

\maketitle

\section{Introduction}

The purpose of this document is to provide a Hamiltonian framework which gives rise to equations of motion that include friction or damping effects.  The author has found a number of other works \cite{bruneau,chandrasekar,froelich,herrera,koopman,rabei,riewe1,riewe2,smith,yu,zmoginov}, but none of them seem to solve the problem in the manner presented here.  The methods of this paper allow for extremely general formulas for the damping term, but the method is also artificial.  Nevertheless we are able to extract various principles such as N\"other's Theorem.

\section{Notation}

We represent the position of a particle by $q$, which might be in a finite or infinite dimensional manifold $M$.  (Everything in this document is formal, and there is no attempt at rigor.)  As usual we have a Lagrangian $L(q,\dot q)$ which is strictly convex in the second coordinate.  The equations of motion are the solution to the variational equation $\delta \mathcal S = 0$, where the action $\mathcal S$ is
\begin{equation}
\mathcal S = \int_{T_0}^{T_1} L(q,\dot q) \, dt
\end{equation}
that is
\begin{equation}
\frac d{dt}\left(\frac{\partial L}{\partial \dot q}\right) - \frac{\partial L}{\partial q} = 0
\end{equation}
As usual, we create the Hamiltonian
\begin{equation}
\label{H from L}
H(q,p) = p \cdot \dot q - L(q,\dot q)
\end{equation}
where $p$ satisfies the equation
\begin{equation}
\label{p from dot q}
p = \frac{\partial L}{\partial \dot q}
\end{equation}
It can then be shown that
\begin{gather}
\label{dot q from p}
\dot q = \frac{\partial H}{\partial p} \\
\label{Lq = -Hq}
\frac{\partial L}{\partial q} = - \frac{\partial H}{\partial q}
\end{gather}
where in the latter formula one has to be aware that the first partial derivative is keeping $\dot q$ constant, and the second partial derivative is keeping $p$ constant.  The equation of motion becomes
\begin{equation}
\dot p = - \frac{\partial H}{\partial q}
\end{equation}
The usual example is
\begin{equation}
L(q,\dot q) = \tfrac12 \, \dot q \cdot A \cdot \dot q - V(q)
\end{equation}
where $A = A(q)$ is positive definite symmetric.  Then
\begin{gather}
p = A \cdot \dot q \\
H(q,p) = \tfrac12 \, p \cdot A^{-1} \cdot p + V(q)
\end{gather}
and the equation of motion is
\begin{equation}
\frac d{dt} ( A \cdot \dot q) - \tfrac12 \, \dot q \cdot \frac{\partial A}{\partial q} \cdot \dot q + \frac{\partial V}{\partial q} = 0
\end{equation}
The book \cite{arnold} gives many more details.

\section{Striving to understand friction}
\label{striving}

Here we describe a widely held view of how friction arises from reversible Hamiltonian dynamics.  Our concrete equations of motion obtained in Sections~\ref{proof of concept} and~\ref{wave equation} should conform to the framework described here.

Suppose the manifold $M$ is a submanifold of $M_0 \times M_1$, where where $M_0$ represents the macroscopic motion, and $M_1$ represents the microscopic motion.  The position then splits as $q = (q_0,q_1)$, and the momentum splits as $p = p_0 + p_1$ in the obvious manner.  Because $M_1$ represents microscopic motion, it makes sense that $M_1$ should have much higher dimension than $M_0$, indeed we will often suppose that $M_1$ is infinite dimensional.  For example, maybe the microscopic motion is a large number of high frequency harmonic oscillators.

We assume that the Hamiltonian splits as
\begin{equation}
H(q,p) = H_0(q_0,p_0) + H_1(q_0,q_1,p_1)
\end{equation}
Here $H_0$ represents the macroscopic motion without friction.  The second part, $H_1$, is the energy of the microscopic motion plus coupling between the microscopic and macroscopic parts.  We assume that all the interaction between the microscopic and macroscopic parts is via position and not momentum.

Because $H_1$ does not depend upon $p_0$, from equation~\eqref{dot q from p} we obtain
\begin{gather}
\dot q_0 = \frac{\partial H_0}{\partial p_0} \\
\dot q_1 = \frac{\partial H_1}{\partial p_1}
\end{gather}
and the Lagrangian splits as
\begin{equation}
L(q,\dot q) = L_0(q_0,\dot q_0) + L_1(q_0,q_1,\dot q_1)
\end{equation}
where
\begin{gather}
L_0(q_0,\dot q_0) = p_0 \cdot \dot q_0 - H_0(q_0,p_0) \\
L_1(q_0,q_1,\dot q_1) = p_1 \cdot \dot q_1 - H_1(q_0,q_1,p_0)
\end{gather}
and equations~\eqref{p from dot q} and~\eqref{Lq = -Hq} become
\begin{gather}
p_0 = \frac{\partial L_0}{\partial \dot q_0} \\
p_1 = \frac{\partial L_1}{\partial \dot q_1} \\
\label{L0q = -H0q}
\frac{\partial L_0}{\partial q} = - \frac{\partial H_0}{\partial q} \\
\label{L1q = -H1q}
\frac{\partial L_1}{\partial q} = - \frac{\partial H_1}{\partial q}
\end{gather}
(The only mildly non-obvious statements are equation~\eqref{L0q = -H0q} which follows from ${\partial L_0}/{\partial q_0} = - {\partial H_0}/{\partial q_0}$ and ${\partial L_0}/{\partial q_1} = {\partial H_0}/{\partial q_1} = 0$, and equation~\eqref{L1q = -H1q} which follows by subtraction.)

The equations of motion now become
\begin{gather}
\dot p_0 = - \frac{\partial H_0}{\partial q_0} - \frac{\partial H_1}{\partial q_0} \\
\dot p_1 = \phantom{- \frac{\partial H_0}{\partial q_1}} - \frac{\partial H_1}{\partial q_1} 
\end{gather}
Now, if the microscopic motion starts at rest, then we \emph{hope} that the second law of thermodynamics should be obeyed, and as a function of time, $H_0$ should be non-increasing, and $H_1$ should be non-decreasing.   Thus the hope is that the motion described by $q_1$ somehow provides a memory of what has happened to $q_0$, so that we end up with
\begin{equation}
\frac{\partial H_1}{\partial q_0} = \Phi(q_0,\dot q_0)
\end{equation}
for some function $\Phi$.
Then $H_0$ satisfies
\begin{equation}
\begin{aligned}
\dot H_0 &= \dot q_0 \cdot \frac{\partial H_0}{\partial q_0} + \dot p_0 \cdot \frac{\partial H_0}{\partial p_0} \\& = \dot q_0 \cdot (-\dot p_0 - \Phi(q_0,\dot q_0)) + \dot p_0 \cdot \dot q_0 = - \dot q_0 \cdot \Phi(q_0, \dot q_0)
\end{aligned}
\end{equation}
If $\Phi$ satisfies a property like
\begin{equation}
\label{Phi prop}
y \cdot \Phi(x,y) \ge 0
\end{equation}
then we that $H_0$ is non-increasing in time.

The case we can handle with our approach is to consider is
\begin{equation}
\Phi(x,y) = \frac{\partial \tilde L}{\partial y}(x,y)
\end{equation}
where $y\mapsto \tilde L(x,y)$ is a strictly convex.  Note that property~\eqref{Phi prop} holds if $\tilde L(x,y)$ attains its minimum at $y = 0$, because
\begin{equation}
\begin{aligned}
y \cdot \frac{\partial \tilde L}{\partial y}(x,y)
&= y \cdot \frac{\partial \tilde L}{\partial y}(x,0) + \int_0^1 \frac\partial{\partial \lambda} \left( y \cdot \frac{\partial \tilde L}{\partial y}(x,\lambda y) \right) \, d\lambda \\
&= \int_0^1 y \cdot \frac{\partial^2 \tilde L}{\partial y^2}(x,\lambda y) \cdot y \, d\lambda \ge 0
\end{aligned}
\end{equation}
since the Hessian of $\tilde L(x,\cdot)$ is positive definite.

The initial values of $q_1$ and $p_1$ should be very important, perhaps representing that the parts in $M_1$ are at rest, or are oscillating in such a manner that the oscillations can only increase.  This is because Hamiltonian equations are reversible in time.  But dissipative systems should only be solvable forwards in time.  The is sometimes known as the ``arrow of time.''  For example, we know that if a wine glass is thrown to the ground, then it shatters.  The reverse process, where the wine glass comes back together, is possible in theory if the initial positions and momenta of the molecules and atoms are given in a very precise manner (and we neglect the effects of quantum physics).  But finding this initial data should be very difficult.

\section{Friction via the wave equation: proof of concept}
\label{proof of concept}

We will show how to artificially create damping via the wave equation.  As a proof of concept, let us explain how to create a Hamiltonian such that
\begin{gather}
\label{q f m B}
\dot q_0 = \frac{\partial H_0}{\partial p_0} \\
\label{p f m B}
\dot p_0 = - \frac{\partial H_0}{\partial q_0} - B \cdot \dot q_0
\end{gather}
for some positive definite matrix $B$.

An example of an infinite dimensional Hamiltonian is the wave equation on $C^1(\mathbb R,M)$, whose Hamiltonian is given by
\begin{equation}
\int_{-\infty}^\infty \tfrac12\,p(s)\cdot B^{-1}\cdot p(s) + \tfrac12 \, q'(s)\cdot B \cdot q'(s) \, ds
\end{equation}
Here $q'$ denotes ${\partial q}/{\partial s}$.  Now the equation of motion coming from this Hamiltonian is
\begin{equation}
B \ddot q = B q''
\end{equation}
which by d'Alembert's principle has a general solution
\begin{equation}
\label{d'Alemb}
q(s,t) = \phi_1(s+t) + \phi_2(s-t)
\end{equation}
This is a good example of a Hamiltonian for which Poincar\'e's recurrence theorem does not apply.  And furthermore, the equations of motion are a point of variation of the Lagrangian, but are clearly not a minimum of the Lagrangian:
\begin{equation}
\int_{-\infty}^\infty \tfrac12\,\dot q(s)\cdot B^\cdot \dot q(s) - \tfrac12 \, q'(s)\cdot B \cdot q'(s) \, ds
\end{equation}
The way we simulate friction is to consider the macroscopic motion as driving a wave equation on the half line.  Let $M_1 = C^1((-\infty,0],M_0)$, that is, the set of functions $q:(-\infty,0]\to M_0$ such that $q'$ is bounded, $q'(s) \to q'(0)$ as $s\to 0^-$, where $q'(0)$ denotes the left derivative of $q(s)$ at $s=0$, and $q'(s) \to 0$ as $s \to -\infty$.

Then let
\begin{equation}
M = \{ (q_0, q_1) \in M_0 \times M_1 : q_1(0) = q_0 \}
\end{equation}
with the Hamiltonian
\begin{equation}
H(q,p) = H_0(q_0,p_0) + \int_{-\infty}^0 \tfrac12\,p_1(s)\cdot B^{-1}\cdot p_1(s) + \tfrac12 \, q'_1(s)\cdot B \cdot q'_1(s) \, ds
\end{equation}
We see that the equation of motion is
\begin{gather}
\label{dot q0}
\dot q_0 = \frac{\partial H_0}{\partial p_0} \\
\dot p_0 = - \frac{\partial H_0}{\partial q_0} - B \cdot q'_1(0) \\
\label{dot q1}
B \ddot q_1 = B q_1''
\end{gather}
For example, to obtain equations~\eqref{dot q0} and~\eqref{dot q1}, see that for any infinitesimal perturbation $\delta q = (\delta q_0, \delta q_1)$ of $q = (q_0,q_1)$, noting that $\delta q_0 = \delta q_1(0)$ and $\delta q'_1(s) \to 0$ as $s \to -\infty$, we have
\begin{align*}
\delta q \cdot \frac{\partial H}{\partial q}
&=
\delta q_0 \cdot \frac{\partial H_0}{\partial q_0}(q_0,p_0) + \int_{-\infty}^0 \delta q'_1(s)\cdot B \cdot q'_1(s) \, ds \\
&=
\delta q_0 \cdot \frac{\partial H_0}{\partial q_0}(q_0,p_0) + \Bigl[ \delta q_1(s)\cdot B \cdot q'_1(s) \Bigr]_{-\infty}^0 - \int_{-\infty}^0 \delta q_1(s)\cdot B \cdot q''_1(s) \, ds \\
&=
\delta q_0 \cdot \frac{\partial H_0}{\partial q_0}(q_0,p_0) + \delta q_0\cdot B \cdot q'_1(0) - \int_{-\infty}^0 \delta q_1(s)\cdot B \cdot q''_1(s) \, ds
\end{align*}
Next we impose initial conditions on $(q_1,p_1)$
\begin{equation}
\label{wave init}
q_1(s,T_0) = q_0(T_0), \qquad
p_1(s,T_0) = 0 \qquad (s \le 0)
\end{equation}
that is, the microscopic part of the motion is initially at rest.  Then it can be seen that the solution to the wave equation is given by~\eqref{d'Alemb} with $\phi_1(t) = q_0(t)$ and $\phi_2 = 0$, that is
\begin{equation}
\label{wave sol}
q_1(s,t) = \begin{cases}q_1(s+t-T_0,T_0) = q_0(T_0) & \text{if $s+t \le T_0$} \\ q_0(s+t) & \text{if $s+t \ge T_0$} \end{cases}
\end{equation}
In particular, we see that $q'_1(0,t) = \dot q_1(0,t) = \dot q_0(t)$, and hence we obtain equations~\eqref{q f m B} and~\eqref{p f m B}.

Thus it is seen that the irreversibility of equations of motion with damping comes naturally from the special nature of the initial conditions.  (And indeed any initial condition satisfying $q_1'(s,T_0) = \dot q_1(s,T_1)$ and $q_1(0,T_0) = q_0(T_0)$ will work just as well.)

\section{The general equation of motion with damping}
\label{wave equation}

Suppose that $\tilde H$, and its corresponding Lagrangian $\tilde L$, have the same domains as $H_0$ and $L_0$.  Here $\tilde L$ is also called the Rayleigh dissipation function.

Consider the Hamiltonian on $M = \{ (q_0, q_1) \in M_0 \times M_1 : q_1(0) = q_0 \}$ given by
\begin{equation}
\label{The H}
H(q,p) = H_0(q_0,p_0) + \int_{-\infty}^0 \tilde H(q_1(s),p_1(s)) + \tilde L(q_1(s),q_1'(s)) \, ds
\end{equation}
Note that the corresponding Lagrangian is
\begin{equation}
L(q,\dot q) = L_0(q_0,\dot q_0) + \int_{-\infty}^0 \tilde L(q_1(s),\dot q_1(s)) - \tilde L(q_1(s),q_1'(s)) \, ds
\end{equation}

\begin{thm} The equations of motion for the Hamiltonian given by~\eqref{The H} with initial conditions~\eqref{wave init} imply
\begin{gather}
\dot q_0 = \frac{\partial H_0}{\partial p_0} \\
\dot p_0 = - \frac{\partial H_0}{\partial q_0}(q_0,p_0) - \frac{\partial \tilde L}{\partial \dot q_0}(q_0,\dot q_0)
\end{gather}
\end{thm}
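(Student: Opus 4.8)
The plan is to mimic the calculation of Section~\ref{proof of concept}, replacing the quadratic wave Hamiltonian used there by the general expression~\eqref{The H}. Since neither $H_0$ nor the constraint $q_1(0)=q_0$ involves any momentum, $\partial H/\partial p$ is computed immediately and yields $\dot q_0 = \partial H_0/\partial p_0$ together with $\dot q_1(s) = (\partial\tilde H/\partial p_1)(q_1(s),p_1(s))$. For $\partial H/\partial q$ I would take an infinitesimal perturbation $\delta q=(\delta q_0,\delta q_1)$ tangent to $M$, so that $\delta q_0=\delta q_1(0)$ and $\delta q_1'(s)\to 0$ as $s\to-\infty$, differentiate~\eqref{The H}, and integrate the term $\int_{-\infty}^0\delta q_1'(s)\cdot(\partial\tilde L/\partial q_1')(q_1(s),q_1'(s))\,ds$ by parts in $s$, exactly as in the proof of concept. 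The contribution at $-\infty$ vanishes, and the contribution at $s=0$ is $\delta q_0\cdot(\partial\tilde L/\partial\dot q_0)(q_1(0),q_1'(0))$. Matching against $\delta q\cdot\dot p=\delta q_0\cdot\dot p_0+\int_{-\infty}^0\delta q_1(s)\cdot\dot p_1(s)\,ds$ and reading off the coefficients of $\delta q_0$ and of $\delta q_1(s)$ gives
\[
\dot p_0 = -\frac{\partial H_0}{\partial q_0}(q_0,p_0) - \frac{\partial\tilde L}{\partial\dot q_0}(q_1(0),q_1'(0)),
\]
together with an evolution equation for the field $p_1$.

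Next I would rewrite the $q_1$ equation in Lagrangian form. Because $\tilde H$ and $\tilde L$ are Legendre conjugates, we have $p_1(s)=(\partial\tilde L/\partial\dot q)(q_1(s),\dot q_1(s))$ and $(\partial\tilde H/\partial q)(q_1,p_1)=-(\partial\tilde L/\partial q)(q_1,\dot q_1)$; substituting these into the $p_1$ equation turns it into the nonlinear wave-type equation
\[
\frac{d}{dt}\frac{\partial\tilde L}{\partial\dot q_1}(q_1,\dot q_1)-\frac{\partial\tilde L}{\partial q_1}(q_1,\dot q_1) = \frac{d}{ds}\frac{\partial\tilde L}{\partial q_1'}(q_1,q_1')-\frac{\partial\tilde L}{\partial q_1}(q_1,q_1').
\]
The key observation is that every travelling wave $q_1(s,t)=\psi(s+t)$ solves this equation: for such a $q_1$ one has $\dot q_1=q_1'=\psi'$, and $d/dt$ and $d/ds$ act identically on functions of $s+t$, so the two sides agree term by term.

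Finally I would impose the initial conditions~\eqref{wave init} and argue, just as in Section~\ref{proof of concept}, that they select the travelling wave given by~\eqref{wave sol}, namely $q_1(s,t)=q_0(s+t)$ for $s+t\ge T_0$ and $q_1(s,t)=q_0(T_0)$ otherwise. This $q_1$ satisfies the constraint $q_1(0,t)=q_0(t)$, satisfies $q_1(s,T_0)=q_0(T_0)$, and has $\dot q_1(s,T_0)=0$ for $s<0$, whence $p_1(s,T_0)=(\partial\tilde L/\partial\dot q)(q_0(T_0),0)=0$ once we use the hypothesis that $\tilde L(x,\cdot)$ attains its minimum at $0$ (the one that makes~\eqref{Phi prop} hold). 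By uniqueness for the resulting initial--boundary value problem this is the solution, and then $q_1'(0,t)=\dot q_0(t)$, so the friction term in the displayed equation for $\dot p_0$ becomes $(\partial\tilde L/\partial\dot q_0)(q_0,\dot q_0)$, which is the assertion.

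I expect the main obstacle to be the bookkeeping at the endpoint $s=0$. Because of the constraint $q_1(0)=q_0$, the boundary term produced by the integration by parts must be attributed to the $\dot p_0$ equation rather than to the $\dot p_1$ field equation, and it carries precisely the argument $q_1'(0)$, which only coincides with $\dot q_0$ after the explicit wave solution~\eqref{wave sol} has been invoked. A secondary difficulty, absent in the proof of concept, is that there is no d'Alembert formula in the nonlinear setting, so the step identifying~\eqref{wave sol} as the solution must lean on the travelling-wave ansatz together with the special compatibility of the initial data~\eqref{wave init} with a one-directional wave.
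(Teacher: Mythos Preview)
Your proposal is correct and follows essentially the same route as the paper's own proof: derive the Hamiltonian field equations for $(q_1,p_1)$ via variation and integration by parts, convert to the nonlinear wave equation~\eqref{wave} through the Legendre transform, invoke the traveling-wave solution~\eqref{wave sol}, and then read off $q_1'(0)=\dot q_0$ to identify the friction term in~\eqref{xxx}. If anything you are slightly more careful than the paper, which simply asserts that~\eqref{wave sol} is valid without spelling out the traveling-wave ansatz or noting, as you do, that compatibility of $p_1(\cdot,T_0)=0$ with $\dot q_1(\cdot,T_0)=0$ tacitly uses the standing hypothesis that $\tilde L(x,\cdot)$ is minimized at the origin.
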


\begin{proof}
First, looking at the equations of motion for $(q_1,p_1)$ we obtain
\begin{gather}
\label{dot ps}
\dot p_1 = - \frac{\partial \tilde H}{\partial q_1}(q_1,p_1) - \frac{\partial \tilde L}{\partial q_1}(q_1,q'_1) + \frac \partial{\partial s}\left(\frac{\partial \tilde L}{\partial q'_1}(q_1,q'_1)\right) \\
\label{dot qs}
\dot q_1 = \frac{\partial \tilde H}{\partial p_1}(q_1,p_1)
\end{gather}
where the last term of equation~\eqref{dot ps} comes by integrating by parts, just as in the previous section.

From the definition of $\tilde L$, we see that equation~\eqref{dot qs} is equivalent to
\begin{equation}
p_1 = \frac{\partial \tilde L}{\partial \dot q_1}(q_1,\dot q_1)
\end{equation}
and substituting into equation~\eqref{dot ps} we obtain
\begin{equation}
\label{wave}
\frac \partial{\partial t}\left(\frac{\partial \tilde L}{\partial \dot q_1}(q_1,\dot q_1)\right) - \frac{\partial \tilde L}{\partial q_1}(q_1,\dot q_1) = \frac \partial{\partial s}\left(\frac{\partial \tilde L}{\partial q'_1}(q_1,q'_1)\right) - \frac{\partial \tilde L}{\partial q_1}(q_1,q'_1)
\end{equation}
This is a wave equation, and since we have initial conditions~\eqref{wave init}, the solution~\eqref{wave sol} is valid.  (The general solution is not so obvious: if $\tilde L(x,y) = \tilde L(x,-y)$, then $\phi_1(s+t)$ and $\phi_2(s-t)$ are both solutions to this equation.  However it is not obvious to me how these two solutions should be combined, since this wave equation is non-linear.)

Next, the equations of motion for $(q_0,p_0)$ become
\begin{gather}
\label{xxx}
\dot p_0 = - \frac{\partial H_0}{\partial q_0}(q_0,p_0) - \frac{\partial \tilde L}{\partial q'_1(0)}(q_1(0),q'_1(0)) \\
\dot q_0 = \frac{\partial H_0}{\partial p_0}(q_0,p_0)
\end{gather}
where the last term of equation~\eqref{xxx} from the cross term from integrating by parts.  From equation~\eqref{wave sol} we see that $q'_1 = \dot q_1$.  Substituting this into equation~\eqref{xxx}, and noting that $q_1(0) = q_0$, we obtain the desired result.
\end{proof}

Finally, it is worth noting that under the initial conditions~\eqref{wave init} the Hamiltonian and Lagrangian evaluate to
\begin{gather}
H(q(t),p(t)) = H_0(q_0(t),p_0(t)) + \int_{T_0}^t \left(\frac{\partial \tilde L}{\partial \dot q_0}(q_0(\tau),\dot q_0(\tau))\right)\cdot \dot q_0(\tau) \, d\tau \\
L(q(t),\dot q(t)) = L_0(q_0(t),\dot q_0(t))
\end{gather}
In the case that $\tilde L(x,y) = \tfrac12 y \cdot B \cdot y$, the Hamiltonian becomes
\begin{equation}
H(q(t),p(t)) = H_0(q_0(t),p_0(t)) + \int_{T_0}^t \dot q_0(\tau) \cdot B \cdot \dot q_0(\tau) \, d\tau
\end{equation}

\section{The arrow of time}
\label{arrow of time}

Now that we have an explicit Hamiltonian that generates the equations of motion with friction, we should examine whether it satisfies the comments made at the end of Section~\ref{striving} on the ``arrow if time."

One way to reverse the effect of friction would be to first solve the equation backwards in time, from $t=T_1$ back to $t=T_0$, using initial conditions $q_1(s,T_1) = q_0(T_1)$, $p_1(s,T_1) = 0$ for ($s \le 0$).  Solve to find $q_1(s,T_0)$, noting that it will satisfy $q_1'(s,T_0) = - \dot q_1(s,T_0)$.  Then this is the initial data that is required to make the `wine glass come back together.'  Obviously not any initial data satisfying $q_1'(s,T_0) = - \dot q_1(s,T_0)$ will work, because the wave has to feed back into the equations of motion for $q_0$ exactly the energy it needs to satisfy the reverse effect of friction.

It would be interesting to see to what extent this equation is sensitive to the initial data, perhaps illustrating that it is very hard to get it exactly right to get the `wine glass to reconstruct.'  The idea would be to analyze the equations of motion for the full system on M, but either restricting to solutions in which $\dot q_1 = q'_1$ (in which case the solutions should be stable), or to the case in which $\dot q_1 = -q'_1$ (in which case the solutions should be unstable --- this latter case requires $\tilde L(x,y) = \tilde L(x,-y)$, otherwise the formulation of the problem would be harder).  Almost surely one will find that in the latter case, the equation is linearly unstable, corresponding to the fact that the Hessian of $\tilde L(x,\cdot)$ is positive definite.

\section{N\"other's Theorem}
\label{noether}

By applying N\"other's Theorem to our Lagrangian, we can easily obtain a version of N\"other's Theorem that applies to dissipative systems.  Suppose there is a flow on $M_0$ that preserves both $L_0$ and $\tilde L$, let us denote the parameter driving the flow by $\lambda$.  Then the following quantity is conserved.
\begin{equation}
\pi(t) = \frac{\partial L_1}{\partial \dot q_0}(q_0(t),\dot q_0(t)) \cdot \frac{\partial q_0(t)}{\partial \lambda}\Big|_{\lambda = 0} + \int_{T_0}^t \frac{\partial \tilde L}{\partial \dot q_0}(q_0(\tau),\dot q_0(\tau)) \cdot \frac{\partial q_0(\tau)}{\partial \lambda}\Big|_{\lambda = 0} \, d\tau
\end{equation}

Let us illustrate with a couple of examples of computing the momentum.  Suppose that $q_0$ is $n$ vectors in $\mathbb R^m$: $q_0 = (q_{01},q_{02},\dots,q_{0n})$.
Let $\mu_x$ be the momentum in the direction $x$
\begin{equation}
\mu(t) = \sum_i m_i \dot q_{0i} \cdot x
\end{equation}
which corresponds to the action
\begin{equation}
\label{action}
\lambda \mapsto (q_0 \mapsto (q_{0i} + \lambda x)_{i=1}^n))
\end{equation}
Suppose that
\begin{gather}
L_0(q_0,\dot q_0) = \sum_i \tfrac12 m_i|\dot q_{0i}|^2 + \sum_i \sum_j \tfrac12 v_{ij}(|q_{0i}-q_{0j}|) \\
\label{friction}
\tilde L(q_0,\dot q_0) = \eta \sum_i m_i \tfrac12 |\dot q_{0i}|^2
\end{gather}
where $v_{ij} = v_{ji}$.  This gives the equation of motion
\begin{equation}
m_i \ddot q_{0i} = - \sum_j \frac{(q_{0i}-q_{0j})}{|q_{0i}-q_{0j}|} v'_{ij}(|q_{0i}-q_{0j}|) - \eta m_i \dot q_{0i}
\end{equation}
Both of these Lagrangians are invariant under the action~\eqref{action}.  We see that
\begin{equation}
\pi(t) = \mu_x(t) + \eta \int_{T_0}^t \mu_x(\tau) \, d\tau
\end{equation}
is conserved, that is
\begin{equation}
\dot \mu_x + \eta \mu_x = 0 \qquad \Rightarrow \qquad \mu_x(t) = e^{-\eta t} \mu_x(0)
\end{equation}
Now suppose we change equation~\eqref{friction} to
\begin{equation}
\label{friction2}
\tilde L(q_0,\dot q_0) = \sum_i \sum_j \tfrac12 \eta_{ij}\left(\frac{\partial}{\partial t}|q_{0i}-q_{0j}|\right)
\end{equation}
where $\eta_{ij} = \eta_{ji}$, that is, all the damping effects are internal and depend completely on how the distances between the particles change.  The equation of motion is now
\begin{equation}
m_i \ddot q_{0i} = - \sum_j \frac{(q_{0i}-q_{0j})}{|q_{0i}-q_{0j}|} v'_{ij}(|q_{0i}-q_{0j}|) - \sum_j \frac{(q_{0i}-q_{0j})}{|q_{0i}-q_{0j}|} \eta'_{ij}\left(\frac\partial{\partial t}|q_{0i}-q_{0j}|\right)
\end{equation}
This Lagrangian is also invariant under the action~\eqref{action}.  We see that
\begin{equation}
\frac{\partial \tilde L}{\partial \dot q_0}(q_0(t),\dot q_0(t)) \cdot \frac{\partial q_0(t)}{\partial \lambda}\Big|_{\lambda = 0} = \sum_i \sum_j \frac{(q_{0i}-q_{0j})}{|q_{0i}-q_{0j}|} \eta'_{ij}\left(\frac\partial{\partial t}|q_{0i}-q_{0j}|\right) = 0
\end{equation}
Therefore $\pi = \mu_x$, and thus the momentum $\mu_x$ is conserved.

Let's try a similar argument with angular momentum.  Let $\alpha_A$ be the angular momentum about the anti-symmetric matrix $A$
\begin{equation}
\alpha(t) = \sum_i m_i \dot q_{0i} \cdot A \cdot q_{0i}
\end{equation}
which corresponds to the action
\begin{equation}
\label{action2}
\lambda \mapsto (q_0 \mapsto (e^{\lambda A} \cdot q_{0i})_{i=1}^n))
\end{equation}
Since $e^{\lambda A}$ is an orthogonal matrix, all of the above Lagrangians are invariant under the action~\eqref{action}.  For the case~\eqref{friction}
\begin{equation}
\pi(t) = \alpha_A(t) + \eta \int_{T_0}^t \alpha_A(\tau) \, d\tau
\end{equation}
is conserved, that is
\begin{equation}
\dot \alpha_A + \eta \alpha_A = 0 \qquad \Rightarrow \qquad \alpha_A(t) = e^{-\eta t} \alpha_A(0)
\end{equation}
For the case~\eqref{friction2}, we obtain
\begin{equation}
\frac{\partial \tilde L}{\partial \dot q_0}(q_0(t),\dot q_0(t)) \cdot \frac{\partial q_0(t)}{\partial \lambda}\Big|_{\lambda = 0}
= \sum_i \sum_j \frac{(q_{0i}-q_{0j})}{|q_{0i}-q_{0j}|} \eta'_{ij}\left(\frac\partial{\partial t}|q_{0i}-q_{0j}|\right) \cdot A \cdot q_{0i}
= 0
\end{equation}
and thus similarly $\alpha_A$ is conserved.

(We should add that technically Lagrangian~\eqref{friction2} doesn't quite fit into our framework, because it is not strictly convex in $\dot q_0$.  But this technicality is just that, and easily overcome.  One can either redo all the work, or add $\epsilon|\dot q_0|^2$ to this Lagrangian and take the limit as $\epsilon\to 0$.)

\section{An identity}
\label{identity}

In this section we present an identity which we believe might be useful for creating conserved quantities.  Suppose that equation~\eqref{wave} is satisfied.  Let
\begin{align}
E &= \tilde H(p_1,q_1) - \tilde L(q_1,q'_1) \\
F &= \dot q_1 \cdot \frac{\partial \tilde L}{\partial q_1'} (q,q')
\end{align}
Then
\begin{equation}
\frac{\partial E}{\partial t} = \frac{\partial F}{\partial s}
\end{equation}
In the case that $\tilde L(x,y) = y \cdot B(x) \cdot y$, so that $\tilde L(q_1,\dot q_1) = \tilde H(q_1,p_1)$, there is a symmetry in the formulas so that we also have
\begin{equation}
\frac{\partial E}{\partial s} = \frac{\partial F}{\partial t}
\end{equation}
and hence both $E$ and $F$ satisfy the standard wave equation:
\begin{equation}
\frac{\partial^2 E}{\partial t^2} = \frac{\partial^2 E}{\partial s^2}, \qquad
\frac{\partial^2 F}{\partial t^2} = \frac{\partial^2 F}{\partial s^2}
\end{equation}

\end{document}